\newcommand{\tr}[1]{\operatorname{Tr}\left(#1\right)}
\newcommand{\trr}[1]{\operatorname{Tr}^2\left(#1\right)}
\newcommand{\bra}[1]{\langle #1 |}
\newcommand{\ket}[1]{| #1 \rangle}
\newcommand{\CC}{{\mathbb C}}
\newcommand{\cC}{\mathcal{C}}
\newcommand{\cD}{\mathcal{D}}
\newcommand{\cM}{\mathcal{M}}
\newcommand{\cH}{\mathcal{H}}
\newcommand{\gainj}{\tfrac{6c\eta_j\Gamma_j}{2\alpha_j-1}}
\newcommand{\hrho}{\widehat{\rho}}
\newcommand{\Exp}{\mathbb{E}}
\newtheorem{thm}{Theorem}[section]
\newtheorem{lem}{Lemma}[section]
\newtheorem{rem}{Remark}[section]
\date{}
\title{\LARGE \bf Continuous-time Quantum Error Correction with Noise-assisted Quantum Feedback}
\author{Gerardo Cardona$^{1,2}$, Alain Sarlette$^{2,3}$ and Pierre Rouchon$^{1,2}$  
	\thanks{$^{1}$Centre Automatique et Syst\`emes, Mines-ParisTech, PSL Research University. 60 Bd Saint-Michel, 75006 Paris, France.
	}%
	\thanks{$^{2}$QUANTIC lab, INRIA Paris, rue Simone Iff 2, 75012 Paris, France
	}
	\thanks{$^{3}$Electronics and Information Systems Department, Ghent University, Belgium.
	}
	\thanks{{\tt\small gerardo.cardona@mines-paristech.fr, pierre.rouchon@mines-paristech.fr, alain.sarlette@inria.fr}}
}
\begin{document}

\maketitle
\thispagestyle{empty}
\pagestyle{empty}

%
%
%
%
%
%
%
%
			\begin{abstract}
We address the standard quantum error correction using the three-qubit bit-flip code, yet in continuous-time. This entails rendering a target manifold of quantum states globally attractive.  Previous feedback designs could feature spurious equilibria, or resort to discrete kicks pushing the system away from these equilibria to ensure global asymptotic stability. We present a new approach that consists of introducing controls driven by Brownian motions. Unlike the previous methods, the resulting closed-loop dynamics can be shown to stabilize the target manifold exponentially. We further present a reduced-order filter formulation with classical probabilities. The exponential property is important to quantify the protection induced by the closed-loop error-correction dynamics against disturbances. We study numerically the performance of this control law and of the reduced filter.
			\end{abstract}
			
			

%
%

	\section{Introduction}

Developing methods to protect quantum information in the presence of disturbances is essential to improve existing quantum technologies (\cite{reed2012realization,ofekQEC}). Quantum error correction (QEC) codes, encode a logical state into multiple physical states. Similarly to classical error correction, this redundancy allows to protect quantum information from disturbances by stabilizing a \textit{submanifold} of steady states, which represent the nominal logical states \cite{lidar2013quantum,nielsen2002quantum}. As long as a disturbance does not drive the system out of the basin of attraction of the original nominal state, the logical information remains unperturbed. To stabilize the nominal submanifold in a quantum system, a \textit{syndrome diagnosis} stage performs quantum non-destructive (QND) measurements extracting information about code disturbances without perturbing the encoded data. Based on this information, a \textit{recovery} feedback action restores the corrupted state.
QEC is most often presented as discrete-time operations towards digital quantum computing, see e.g.~\cite{nielsen2002quantum}. Not only the design of the underlying control layer, but also the proposal of analog quantum technologies, like solving optimization problems by quantum annealing, motivate a study of QEC in continuous-time, among them \textit{reservoir engineering} and \textit{measurement-based feedback}. 

Reservoir engineering couples the target system to a dissipative ancillary quantum system, such that the entropy introduced by errors on the target system is evacuated through the dissipation of the ancillary one. Reservoir engineering for autonomous QEC has been investigated in \cite{Murch2012BathEng},  \cite{cohen2014AutonomousQEC}, \cite{Guillaud2019CatQubits}. An advantage of this approach is that there is no need for external control logic. However, the challenge is to implement the specific ancillary system and coupling within experimental constraints.

Experimental progress on performing high-fidelity quantum measurements now allows to consider measurement-based feedback in continuous-time. In the context of QEC, this has been addressed in \cite{ahn2002continuous,ahn2003quantum,sarovar2004practical,mabuchi2009continuous}, essentially as proposals illustrated by simulation. The short dynamical timescales of experimental setups is a main difficulty towards implementing complex feedback laws. Furthermore, data acquisition and processing leads to latencies in the feedback loop. This motivates the development of efficiently computable control techniques that are robust against unmodeled dynamics.

In this paper we establish analytical results about the convergence rate of QEC systems towards the nominal submanifold, a prerequisite for analytically quantifying the protection of quantum information. To obtain exponential convergence in a compact space, it is necessary to suppress any spurious unstable equilibria that might remain in the closed-loop dynamics. As we noted in \cite{cardona2018exponential}, this problem is greatly simplified by considering stochastic processes to drive the controls (see also~\cite{zhang2018locally} for  feedback laws with similar stochastic terms). Therefore in the present paper, in the context of QEC, we propose  a \textit{noise-assisted} quantum feedback, acting with Brownian noise whose gain is adjusted in real-time. We show via standard stochastic Lyapunov arguments that this new approach renders the target subspace, containing the nominal encoding of quantum information, globally exponentially stable thanks to feedback from syndrome measurements. Furthermore, our strategy allows to work with a reduced state estimator: while other feedback schemes require to keep track of quantum coherences, the proposed feedback scheme allows for the implementation of a reduced filter that only tracks the populations on the various joint eigenspaces of the measurement operators. 

The paper is organized as follows. Section \ref{section:QEC_OpenLoop} presents the dynamical model of the three-qubit bit-flip code, which is the most basic model in QEC. In section \ref{section:QEC_ClosedLoop} we introduce our approach to feedback using noise and we prove exponential stabilization of the target manifold of the three-qubit bit-flip code. It presents as well the reduced order filter that follows from the feedback scheme. Section \ref{section:Simulations} examines the performance of this feedback and reduced filter to protect quantum information from bit-flip errors.

\begin{rem}{(Stochastic Calculus):}
	We will consider concrete instances of It\={o} stochastic differential equations (SDEs) on $\mathbb{R}^n$ of the form
	\begin{equation}\label{eq:SDE}
	dx=\mu(x)dt+\sigma(x)dW,
	\end{equation}
	where $W$ is a standard Brownian motion on $\mathbb{R}^k$, and $\mu,\sigma$ are regular functions of $x$ with image in $\mathbb{R}^n$ and $\mathbb{R}^{n\times k}$ respectively, satisfying the usual conditions for existence and uniqueness of solutions (\cite[Chapter 3]{khasminskii2011stochastic}) on $\mathcal{S}$, a compact and positively invariant subset of $\mathbb{R}^n$.
	
We will use results on stochastic stability (\cite{khasminskii2011stochastic}). Consider \eqref{eq:SDE} with $\mu(x)=\sigma(x)=0$ for $x \in \mathcal{S}_0 \subset \mathcal{S}$, thus $\mathcal{S}_0$ is a compact set of equilibria. Let $V(x)$, a nonnegative real-valued twice continuously differentiable function with respect to every $x\in \mathcal{S}\setminus \mathcal{S}_0$. Its Markov generator associated with \eqref{eq:SDE} is
	\begin{equation}\label{App:DiffOp}
	\mathcal{A}V=\sum_i \mu_i\frac{\partial}{\partial x_i}V+\frac{1}{2} \sum_{i,j} \sigma_i\sigma_j\frac{\partial^2}{\partial x_ix_j}V,
	\end{equation}
	and
	$$
	\Exp[V(x_t)]=V(x_0)+\Exp\left[{\textstyle \int_0^t} \mathcal{A}V(x_s)ds\right].
	$$
	
\begin{thm}[\cite{khasminskii2011stochastic}] \label{thm:Km}
 If there exists $r>0$ such that $\mathcal{A}V(x)\le -rV(x)$, $\forall x\in \mathcal{S}\setminus\mathcal{S}_0$, then $V(x_t)$ is a supermartingale on $\mathcal{S}$ with exponential decay:
$$\Exp[V(x_t)] \leq V(x_0)\; \exp(-r\, t)\, .$$
\end{thm}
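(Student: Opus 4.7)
The plan is to reduce the exponential bound to the supermartingale property of the rescaled process $M_t := e^{rt} V(x_t)$. First I would apply Itô's formula to $M_t$ on the open set $\mathcal{S}\setminus \mathcal{S}_0$, where $V$ is $C^2$ by assumption, obtaining
\begin{equation*}
dM_t = e^{rt}\bigl(rV(x_t) + \mathcal{A}V(x_t)\bigr)\,dt + e^{rt}\,\nabla V(x_t)^\top \sigma(x_t)\,dW_t.
\end{equation*}
The hypothesis $\mathcal{A}V(x) \le -r V(x)$ makes the drift non-positive, so $M_t$ is a local supermartingale. The stochastic integral, being bounded on the compact set $\mathcal{S}$ (which is positively invariant by assumption, so $x_t \in \mathcal{S}$ almost surely), is a genuine martingale with zero expectation.

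Next I would take expectations and use Dynkin's formula (the identity recalled just before the theorem) applied to $M_t$ in place of $V(x_t)$ to obtain
\begin{equation*}
\Exp[M_t] = M_0 + \Exp\!\left[\int_0^t e^{rs}\bigl(rV(x_s)+\mathcal{A}V(x_s)\bigr)\,ds\right] \le M_0,
\end{equation*}
which rearranged gives exactly $\Exp[V(x_t)] \le V(x_0) e^{-rt}$. The same argument applied conditionally on $\mathcal{F}_s$ between times $s \le t$ yields $\Exp[e^{rt}V(x_t)\mid \mathcal{F}_s] \le e^{rs} V(x_s)$, establishing that $e^{rt}V(x_t)$ is a supermartingale; in particular so is $V(x_t)$ since $\mathcal{A}V \le -rV \le 0$.

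The main technical obstacle is the fact that $V$ is only assumed twice differentiable on $\mathcal{S}\setminus \mathcal{S}_0$, so Itô's formula cannot be invoked naively at points where trajectories reach $\mathcal{S}_0$. I would handle this with a localization argument: let $\tau_n$ be the first time $x_t$ enters an $\tfrac1n$-neighbourhood of $\mathcal{S}_0$, apply the previous calculation to the stopped process $M_{t\wedge \tau_n}$ (where Itô's formula is valid on the $C^2$ region), and pass to the limit $n\to\infty$. Here the structural assumption $\mu(x)=\sigma(x)=0$ on $\mathcal{S}_0$ is crucial: once the trajectory reaches $\mathcal{S}_0$ it remains there, so $V(x_{t\wedge\tau_\infty}) = V(x_t)$ eventually, and the monotone/bounded convergence theorems (using compactness of $\mathcal{S}$ and continuity of $V$) allow the inequality to pass to the unstopped process. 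Provided $V$ extends continuously to $\mathcal{S}_0$ with $V=0$ there (as implicit in the Lyapunov setting, since $\mathcal{A}V\le -rV$ with $\mathcal{A}V = 0$ on $\mathcal{S}_0$ forces $V=0$ on $\mathcal{S}_0$), this limiting step causes no trouble and yields the stated exponential bound on all of $\mathcal{S}$.
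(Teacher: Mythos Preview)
The paper does not actually prove this theorem: it is quoted as a known result from Khasminskii's monograph and used as a black box in the subsequent sections. There is therefore no ``paper's own proof'' to compare against.

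Your argument is the standard one and is essentially correct: rescaling by $e^{rt}$, applying It\={o}'s formula, and localizing near $\mathcal{S}_0$ where $V$ may fail to be $C^2$. Two minor remarks. First, your justification that the stochastic integral is a true martingale (``bounded on the compact set $\mathcal{S}$'') tacitly uses boundedness of $\nabla V$, which is only guaranteed away from $\mathcal{S}_0$; this is precisely why the localization with $\tau_n$ is needed, so the boundedness claim should be made for the stopped process rather than globally. Second, your deduction that $V=0$ on $\mathcal{S}_0$ from ``$\mathcal{A}V=0$ on $\mathcal{S}_0$'' is not quite available from the stated hypotheses, since $\mathcal{A}V\le -rV$ is only assumed on $\mathcal{S}\setminus\mathcal{S}_0$ and $V$ is not assumed $C^2$ there; in practice one simply assumes (as the paper implicitly does in its applications) that $V$ extends continuously to $\mathcal{S}$ with $V|_{\mathcal{S}_0}=0$, and then Fatou's lemma or bounded convergence handles the limit $n\to\infty$ without needing that side argument.
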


If $V$ is a meaningful way to quantify the distance to a target set $\{ \, x : V(x)=0\, \} \supseteq \mathcal{S}_0$, then this theorem establishes an exponential convergence result in the sense of expectation of $V$. Analysis in the rest of this paper consists in defining a function $V$ and constructing controls that ensure exponential convergence in the above sense.

\end{rem}
\section{Continuous-time dynamics of the three-qubit bit-flip code}\label{section:QEC_OpenLoop}

The general model for a quantum system subject to several measurement channels (see, e.g., \cite{barchielli2009quantum}) is an It\={o} stochastic differential equation of the type
\begin{align}
\label{eq:oldyns} d \rho_t={\textstyle \sum_k} \cD_{L_k}(\rho)dt+\sqrt{\eta_k}\cM_{L_k}(\rho)dW_k\; ,\\
\nonumber dY_k=\sqrt\eta_k\tr{(L_k+L_k^\dagger)\rho}dt+dW_k \; .
\end{align}
We have used the standard super-operator notation $\cD_L(\rho)=\big (L\rho L^\dagger-\tfrac{1}{2}(L^\dagger L \rho + \rho L^\dagger L)\big)$, $\cM_L(\rho)=\big(L\rho+\rho L^\dagger -\tr{\rho(L+L^\dagger)}\rho\big)$, where $L^\dagger$ denotes the complex conjugate transpose of $L$. The state $\rho$ belongs to the set of density matrices $\mathcal{S}=\{\rho\in\CC^{n\times n}: \rho=\rho^\dagger,\rho \text{ positive semidefinite },\tr{\rho}=1 \}$ on the Hilbert space of the system $\cH \simeq \CC^{n\times n}$; the $\{W_k\}$ are independent standard Brownian motions and the $\{dY_k\}$ correspond to the measurement processes of each measurement channel. The $\eta_{k} \in [0,1]$ express the corresponding measurement efficiencies, i.e.~the ratio of the corresponding channel linking the system to the outside world which is effectively captured by the measurement device; channels $k$ with $\eta_k=0$ represent pure loss channels.

The simplest way to model the feedback stage consists in applying an infinitesimal unitary operation to the open-loop evolution, $\rho_{t+dt}=U_t(\rho_t+d\rho_t)U_t^\dagger$, where $U_t=\exp(-i\sum_jH_ju_{t,j} dt)$ with $H_j$ hermitian operators denoting the control Hamiltonians that can be applied, and each $u_{t,j} dt$ a real control input. The fact that $u_{t,j} dt$ may contain stochastic processes requires to treat this feedback action with care, we will come back to this in the next section.

\subsection{Dynamics of the three-qubit bit-flip code}

The three-qubit bit-flip code corresponds to a Hilbert space $\cH=(\CC^2)^{\otimes 3} \simeq \CC^8$, where $\otimes$ denotes tensor product (Kronecker product, in matrix representation). We denote $I_n$ the identity operator on $\CC^n$ and we write $X_k$, $Y_k$ and $Z_k$ the local Pauli operators acting on qubit~$k$, e.g. $X_2=I_2\otimes \sigma_{\! x}\otimes I_2$.  We denote  $\{\ket{0},\ket{1} \}$ the usual basis states, i.e.~the -1 and +1 eigenstates of the $\sigma_z$ operator on each individual qubit (\cite{nielsen2002quantum}).

The encoding on this 3-qubit system is meant to counter bit-flip errors, which map a $\pm 1$ eigenstate of $Z_k$ to the $\mp 1$ eigenstate for each $k=1,2,3$.
The nominal encoding for a logical information 0 (resp.~1) is on the state $\ket{000}$ (resp.~$\ket{111}$). A single bit-flip on e.g.~the first qubit brings this to $X_1 \ket{000} = \ket{100}$ (resp.~$\ket{011}$), which by majority vote can be brought back to the nominal encoding.

In the continuous-time model \eqref{eq:oldyns}, bit-flip errors occurring with a probability $\gamma_k\, dt \ll 1$ during a time interval $[t,t+dt]$ are modeled by disturbance channels, with $L_{k+3} = \sqrt{\gamma_k}\, X_k$ and $\eta_{k+3}=0$, $k=1,2,3$. The measurements needed to implement ``majority vote'' corrections, so-called syndromes, continuously compare the $\sigma_z$ value of pairs of qubits. The associated measurements correspond in \eqref{eq:oldyns} to $L_k = \sqrt{\Gamma_k} \, S_k$ for $k=1,2,3$, with $S_1= Z_2Z_3$, $S_2=Z_1Z_3$, $S_3=Z_1Z_2$ and $\Gamma_k$ representing the measurement strength. This yields the following open-loop model:
{\small
\begin{equation}\label{eq:QEC_OL}
d\rho=\sum_{k=1}^{3}\Gamma_k\cD_{S_k}(\rho)dt+\sqrt{\eta_k\Gamma_k}\cM_{S_k}(\rho)dW_k
+\sum_{s=1}^{3}\gamma_s\cD_{X_s}(\rho)dt.
\end{equation}
}
We further define the operators:
\begin{multline}\label{eq:QEC_proj}
\Pi_{\cC}=\tfrac{1}{4}\big (I_8+\sum_{k=1}^3S_k\big ) \ \text{, } \  \Pi_j:=X_j\Pi_{\cC}X_j, j\in \{1,2,3 \},
\end{multline}
corresponding to orthogonal projectors onto the eigenspaces of the measurement syndromes. $\Pi_{\cC}$ projects onto the nominal code $\cC:=\text{span}(\ket{000},\ket{111})$ (+1 eigenspace of all the $S_k$), whereas $\Pi_j$ projects onto the subspace where qubit $j$ is flipped with respect to the two others. For each $k\in \{C,1,2,3 \}$, we write
$$p_{t,k} := \tr{\Pi_k\rho_t} \; \geq 0$$
the so-called population of subspace $k$, i.e.~the probability that a projective measurement of the syndromes would give the output corresponding to subspace $k$. By the law of total probabilities, $\sum_{k\in \{C,1,2,3 \}} p_{t,k} = 1$ for all $t$.

\subsection{Behavior under measurement only}

We have the following behavior in absence of feedback actions and disturbances.

\begin{lem}\label{prop:OpenLoop}
	Consider \eqref{eq:QEC_OL} with $\gamma_s=0$ for $s\in {1,2,3}$.
	\begin{itemize}				
		\item[(i)] For each $k\in \{C,1,2,3 \}$, the subspace population $p_{t,k}$ is a martingale i.e.~$\Exp(p_{t,k} | p_{0,k}) = p_{0,k}$ for all $t\ge 0$.
		\item[(ii)] For a given $\rho_0$, if there exists $\bar{k}\in \{C,1,2,3 \}$ such that $p_{0,\bar{k}}=1$ and $p_{0,k}=0$ for all $k \neq \bar{k}$, then $\rho_0$ is a steady state of \eqref{eq:QEC_OL}.
		\item[(iii)]  The Lyapunov function
		$$
		V(\rho) =\sum_{k\in\{\cC,1,2,3\}} \sum_{k'\not =k}\sqrt{ p_k p_{k'} }
		$$
		decreases exponentially as	
		$\;\;\Exp [V(\rho_t)] \le e^{-rt}V(\rho_0)\;\;$
		for all $t\geq 0$,  with rate $\;\;
		r=4\;\min_{k\in\{1,2,3 \} }\eta_k\Gamma_k .\;\;
		$
		In this sense the system exponentially approaches the set of invariant states described in point (ii).
	\end{itemize}		    	
\end{lem}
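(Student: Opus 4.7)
The plan is to dispatch the three items through direct It\={o} calculations driven by the common eigenstructure of the commuting operators $S_1,S_2,S_3$: they are simultaneously diagonal in the computational basis, so each projector $\Pi_k$ commutes with every $S_j$, and I write $s_j^{(k)}\in\{\pm 1\}$ for the eigenvalue of $S_j$ on the range of $\Pi_k$, with $k\in\{\cC,1,2,3\}$ and $j\in\{1,2,3\}$.

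For (i), I apply It\={o} to $p_{t,k}=\tr{\Pi_k\rho_t}$. Using $\Pi_k S_j=s_j^{(k)}\Pi_k$ and $S_j^2=I_8$, each drift contribution $\tr{\Pi_k\cD_{S_j}(\rho)}$ collapses to $\tr{\Pi_k\rho}-\tr{\Pi_k\rho}=0$, so what remains is the bounded diffusion
\[
dp_{t,k}=\sum_{j} 2\sqrt{\eta_j\Gamma_j}\,p_{t,k}\bigl(s_j^{(k)}-\tr{S_j\rho_t}\bigr)\,dW_j,
\]
and boundedness on $[0,1]$ upgrades the local martingale to a true martingale. For (ii), $p_{0,\bar{k}}=1$ forces $\rho_0=\Pi_{\bar k}\rho_0\Pi_{\bar k}$, so $S_j\rho_0=s_j^{(\bar k)}\rho_0=\rho_0 S_j$ for every $j$; both $\cD_{S_j}(\rho_0)$ and $\cM_{S_j}(\rho_0)$ then vanish by direct substitution.

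For (iii), the key step is the reparametrization $u_k=\sqrt{p_{t,k}}$, under which $V=\sum_{k\neq k'}u_ku_{k'}$. It\={o} on $\sqrt{\cdot}$ combines the diffusion of $p_{t,k}$ with the $-\tfrac{1}{8}p_{t,k}^{-3/2}(dp_{t,k})^2$ correction to give $du_k=u_k\sum_j(s_j^{(k)}-\tr{S_j\rho_t})\sqrt{\eta_j\Gamma_j}\,dW_j-\tfrac{u_k}{2}\sum_j(s_j^{(k)}-\tr{S_j\rho_t})^2\eta_j\Gamma_j\,dt$. Expanding $d(u_ku_{k'})=u_{k'}du_k+u_ku_{k'}du_{k'}+du_k\,du_{k'}$, the cross-variation supplies exactly the cross term needed to complete a perfect square in which the conditional means $\tr{S_j\rho_t}$ cancel, yielding
\[
\mathcal{A}(u_ku_{k'})=-\tfrac{1}{2}\,u_ku_{k'}\sum_{j}\bigl(s_j^{(k)}-s_j^{(k')}\bigr)^2\eta_j\Gamma_j.
\]
A quick tabulation of the four syndrome patterns $(+,+,+),(+,-,-),(-,+,-),(-,-,+)$ shows that any pair $k\neq k'$ is distinguished by exactly two of the $S_j$, each contributing $(s_j^{(k)}-s_j^{(k')})^2=4$. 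Hence $\mathcal{A}(u_ku_{k'})\leq -4\min_{j}(\eta_j\Gamma_j)\,u_ku_{k'}$, which summed over pairs gives $\mathcal{A}V\leq -rV$ with $r=4\min_j\eta_j\Gamma_j$, and Theorem~\ref{thm:Km} delivers the claimed exponential decay.

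The main obstacle is the lack of $C^2$-regularity of $\sqrt{p_kp_{k'}}$ on the faces $\{p_k=0\}$ of the probability simplex that lie in $\mathcal{S}\setminus\mathcal{S}_0$, while Theorem~\ref{thm:Km} is stated for a twice-differentiable Lyapunov function there. I would circumvent this by applying It\={o} to the regularization $V_\epsilon=\sum_{k\neq k'}\sqrt{(p_k+\epsilon)(p_{k'}+\epsilon)}$, which is $C^\infty$, recovering the same drift inequality up to $O(\epsilon)$ error terms, and then passing to the limit $\epsilon\downarrow 0$ by dominated convergence; alternatively one can work directly with the $u_k$, which satisfy a Lipschitz SDE on $[0,1]$ after the reparametrization, and verify the generator inequality there.
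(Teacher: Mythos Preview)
Your proof is correct and follows essentially the same route as the paper: both reparametrize via $\xi_k=\sqrt{p_k}$, compute the It\={o} drift of $\xi_k\xi_{k'}$, and exploit that each pair of subspaces is distinguished by exactly two syndromes, each contributing a factor $4\eta_j\Gamma_j$. Your eigenvalue notation $s_j^{(k)}$ makes the perfect-square cancellation of the conditional means $\tr{S_j\rho}$ more transparent than the paper's explicit coordinate-by-coordinate SDEs, and your remark on the $C^2$-regularity of $\sqrt{p_kp_{k'}}$ at the faces $\{p_k=0\}$ addresses a technical point the paper passes over in silence.
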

\begin{proof}
	The first two statements are easily verified, we prove the last one. The variables $\xi_j=\sqrt{p_j}$, $j\in\{1,2,3 ,\cC\}$ satisfy the following SDE's:
	\begin{multline*}
	d\xi_\cC=-2 \xi_\cC \,\Big(\sum_{k\in\{1,2,3\}}\eta_k\Gamma_k(1-\xi_\cC^2-\xi_k^2)^2\Big)\, dt \\
	+ 2 \xi_\cC \, \Big( \sum_{k\in\{1,2,3\}} \sqrt{\eta_k\Gamma_k}(1-\xi_\cC^2-\xi_k^2)\,  dW_k  \Big) \; ,
	\end{multline*}
	\begin{multline*}
	d\xi_{j\neq \cC} =-2 \xi_j\, \Big( \eta_j\Gamma_j(1-\xi_\cC^2-\xi_j^2)^2\\ +\sum_{k\in\{1,2,3\}\setminus j} \eta_k\Gamma_k (\xi_\cC^2+\xi_k^2)^2\Big )\, dt\\
	+ 2\xi_j \, \Big(\sqrt{\eta_j\Gamma_j}(1-\xi_\cC^2-\xi_j^2)\,dW_j\\-\sum_{k\in\{1,2,3\}\setminus j} \sqrt{\eta_k\Gamma_k}(\xi_\cC^2+\xi_k^2)\,dW_k\Big) \; ,
	\end{multline*}
	while $V=\sum_{k\in\{\cC,1,2,3\}} \sum_{k'\neq k}\xi_k\xi_{k'}$.
	Noting that $2(1-\xi_\cC^2-\xi_k^2)$ and $2(\xi_\cC^2+\xi_k^2)$ just correspond to $1\pm\tr{\rho S_k}$, we only have to keep track of $\pm$ signs to compute
	\begin{multline*}
	\mathcal{A}V= -2 \sum_{k \in \{\cC,1,2,3\}} \sum_{j \in \{\cC,1,2,,3\} \setminus k} \xi_j \xi_k  \sum_{l \in \{1,2,3\}} \epsilon_{j,k,l} \eta_l \Gamma_l
	\end{multline*}
	where, for each pair $(j,k)$, the selector $\epsilon_{j,k,l} \in \{0,1\}$ equals $1$ for two $l$ values, namely $\epsilon_{\cC,k,l}= \epsilon_{k,\cC,l} = 1$ if $l\neq k \in \{1,2,3\}$ and $\epsilon_{j,k,j}= \epsilon_{j,k,k}=1$ for $j,k \in \{1,2,3\}$. This readily leads to
	$\mathcal{A}V \leq -4\, \min_{k\in\{1,2,3 \} }(\eta_k\Gamma_k) \; V \; .$	
We conclude by Theorem \ref{thm:Km} and noting that $V = 0$ necessarily corresponds to a state as described in point (ii).
\end{proof}

The above Lyapunov function describes the convergence of the state towards $\tr{\Pi_{\bar{k}}\rho}=1$, for a random  subspace $\bar{k} \in \{\cC,1,2,,3\}$ chosen with probability $p_{0,\bar{k}}$. We now address how to render a particular subspace globally attractive, namely the one associated to $\Pi_{\cC}$ and nominal codewords.


\section{Error correction via noise-assisted feedback stabilization}\label{section:QEC_ClosedLoop}

\subsection{Controller design}\label{subsec:controldesign}

Error correction requires to design a control law satisfying two properties:
\begin{itemize}
	\item Drive any initial state $\rho_0$ towards a state with support only on the nominal codespace $\cC=\text{span}\{\ket{000},\ket{111}\}$. This comes down to making $\tr{\Pi_\cC \rho_t}$ converge to $1$.
	\item For $\tr{\Pi_\cC \rho_0}=1$ and in the presence of disturbances $\gamma_s \neq 0$, minimize the distance between $\rho_t$ and $\rho_0$ for all $t\ge 0$.
\end{itemize}
We now directly address the first point, the second one will be discussed in the sequel.

As mentioned in the introduction, this problem has already been considered before, yet without proof of exponential convergence. Towards establishing such proof, we introduce a key novelty into the feedback signal: we drive it by a stochastic process. 
Indeed, noise can be as efficient as a deterministic action to exponentially destabilize a spurious equilibrium where $\bar{k} \neq \cC$; in turn, using noise simplifies the study of the average dynamics, both in the analysis via Theorem \ref{thm:Km} and towards implementing a quantum filter to estimate $\rho$.
We thus introduce \textit{noise-assisted quantum feedback}, where the control input consists of pure noise with state-dependent gain; i.e.~we take 
$$u_jdt=\sigma_j(\rho)dB_j\; ,$$
with $B_j(t)$ a Brownian motion independent of any $W_k(t)$. As control Hamiltonians we take $H_j=X_j$, thus rotating back the bit-flip actions. The closed-loop dynamics in It\={o} sense then writes:
{\small
\begin{multline}\label{eq:QEC_CL}
 d\rho=\sum_{k=1}^{3}\Gamma_k\cD_{S_k}(\rho)dt+\sqrt{\eta_k\Gamma_k}\cM_{S_k}(\rho)dW_k +\sum_{s=1}^{3}\gamma_s\cD_{X_s}(\rho)dt\\
+\sum_{j=1}^3 -i\sigma_j(\rho)[X_j,\rho]dB_j + \sigma_j(\rho)^2\cD_{X_j}(\rho)dt \,.
\end{multline}}
The last term can be viewed as ``encouraging'' a bit-flip with a rate depending on the value of $\sigma_j$ and thus on $\rho$. The remaining task is to design the gains $\sigma_j$.
For this many options will work --- its only essential role is to ``shake'' the state when it is close to $\tr{\Pi_\cC \rho} = 0$, since the open loop already ensures stochastic convergence to either $\tr{\Pi_\cC \rho} = 0$ or $\tr{\Pi_\cC \rho} = 1$.
The following hysteresis-based control law, illustrated by Fig.~\ref{fig:FeedbackZones}, depends only on the $p_{t,k}$ and should not be too hard to implement. Select real parameters $\alpha_j$ and $\beta_j$ such that  $\tfrac{1}{2}<\beta_j<\alpha_j < 1$ for $j\in \{1,2,3 \}$, and take  a constant $c>0$.
\begin{enumerate}
	\item If $ p_{j} \geq \alpha_j$ then take $\sigma_j=\sqrt{\gainj}$; \label{feedback1}
	\item If $ p_{j} \leq \beta_j$  then take $\sigma_j=0$; \label{feedback2}
	\item In the hysteresis region, i.e.~for values of $p_{j} \in ]\beta_j,\alpha_j[$: keep the previous value of $\sigma_j$. \label{feedback3}
\end{enumerate}
\begin{figure}
	\centering
	\includegraphics[width=0.7\linewidth]{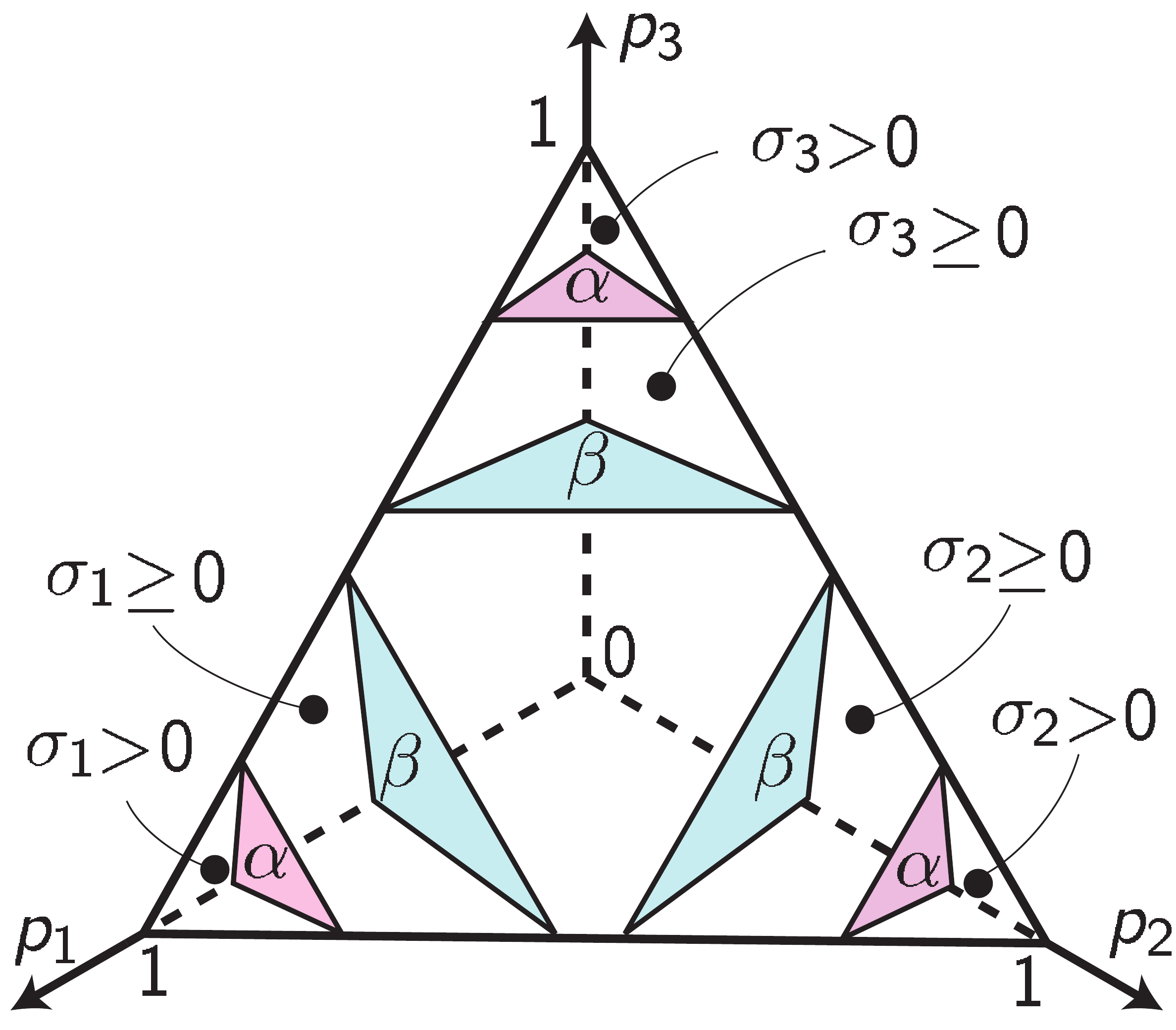}
	\caption{for $\alpha_j\equiv\alpha$ and $\beta_j\equiv\beta$,  the 6 active feedback zones in the simplex {\small$\big\{(p_1,p_2,p_3)~\big|~p_1,p_2,p_3 \geq 0, ~p_1+p_2+p_3 \leq 1 \big\}$}. }
	\label{fig:FeedbackZones}
\end{figure}

\subsection{Closed-loop exponential convergence}\label{ssec:cec}

We propose the closed-loop Lyapunov function:
\begin{equation}\label{eq:V}
  V(\rho)= V_1(\rho) + V_2(\rho) + V_3(\rho)
\end{equation}
with  $ V_k(\rho)= \sqrt{p_k +\; p_1+p_2+p_3}$ for $k=1,2,3$.

\begin{thm}\label{theorem:QEC_Disc}
Consider \eqref{eq:QEC_CL} with all $\gamma_s=0$ and feedback gains  $(\sigma_j)$ as specified just before section \ref{ssec:cec}. Then
	\begin{equation*}
	\Exp[V(\rho_t)]\le V(\rho_0) e^{-rt}, \; \forall t\ge 0,
	\end{equation*}
	with the exponential convergence rate estimated as:
{\small
\begin{multline*}
  	r= \left(\min_{j\in\{1,2,3\}}\!\! \eta_j\Gamma_j\right)
          \min\left( c~ ,~  \tfrac{4}{3\sqrt{2}} \min_{(s,x_1,x_2,x_3)\in K } g(s,x_1,x_2,x_3)\right)
\end{multline*}
}
where  $g(s,x_1,x_2,x_3)$ is given in~\eqref{eq:gg} below and
{\small
$$
K=\Big\{(s,x_1,x_2,x_3)\in[0,1]^4~\Big|~x_1+x_2+x_3=1; sx_j \leq \alpha_j \Big\}
$$
}
\end{thm}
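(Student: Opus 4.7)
The plan is to apply It\^o's formula to $V$, establish $\mathcal{A}V\leq -rV$ on the population simplex $\{(p_\cC,p_1,p_2,p_3)\geq 0: p_\cC+p_1+p_2+p_3=1\}$, and invoke Theorem~\ref{thm:Km}. Since each $\Pi_k$ commutes with every $S_l$, the SDE for $p_k=\tr{\Pi_k\rho}$ closes on populations along the measurement channels: each $p_k$ is a martingale with $dW_l$-diffusion $2\sqrt{\eta_l\Gamma_l}\,p_k\big(s_{k,l}-\sum_{k'}s_{k',l}p_{k'}\big)$, where $s_{k,l}\in\{\pm 1\}$ is the eigenvalue of $S_l$ on subspace $k$. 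The noise-assisted feedback adds, for each $j$, a drift $\sigma_j^2(p_{\pi_j(k)}-p_k)\,dt$ with $\pi_j$ the permutation of $\{\cC,1,2,3\}$ induced by conjugation by $X_j$ (it swaps $\cC\leftrightarrow j$ and the two remaining indices of $\{1,2,3\}$), together with a $dB_j$ term whose coefficient depends on off-diagonal matrix elements of $\rho$.

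A structural simplification: each $V_k=\sqrt{p_k+p_1+p_2+p_3}$ is the square root of a positive linear function of $(p_1,p_2,p_3)$ and hence concave, so $V$ is concave and its Hessian is negative semidefinite. The co-variation contribution of the feedback Brownian motions $B_j$ enters $\mathcal{A}V$ as a sum of non-negative rank-one matrices contracted with this negative-semidefinite Hessian, and therefore contributes non-positively; discarding it yields an upper bound on $\mathcal{A}V$ that depends only on populations---this is also what makes the feedback compatible with the reduced-order filter alluded to in the introduction. What remains is to bound the drift plus the (population-only) $dW_l$-co-variation.

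I then split across two regions. Outside $K$, some $p_j>\alpha_j$, so $\sigma_j^2=\gainj$, and $\sigma_j^2(p_\cC-p_j)\,\partial_{p_j}V$ is strongly negative because $p_\cC-p_j\leq 1-2\alpha_j<0$. The value $\gainj$ is chosen precisely so that, after absorbing the sign-indefinite cross-swap term $\sigma_j^2(p_l-p_m)(\partial_{p_l}V-\partial_{p_m}V)$ for $\{l,m\}=\{1,2,3\}\setminus\{j\}$ by elementary inequalities between $V_1,V_2,V_3$, one obtains $\mathcal{A}V\leq -c\,\eta_j\Gamma_j\,V$, hence the $c$-branch of the minimum. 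Inside $K$ (all $p_j\leq\alpha_j$), the parametrization $s=p_1+p_2+p_3$, $x_j=p_j/s$ and a cancellation analogous to the proof of Lemma~\ref{prop:OpenLoop} for the new $V$ yield the measurement-only estimate $\mathcal{A}^{\text{meas}}V\leq -\tfrac{4}{3\sqrt 2}\min_j(\eta_j\Gamma_j)\,g(s,x_1,x_2,x_3)\,V$; any hysteresis-on feedback drift in $K$ either adds a strictly negative term or is absorbed by the slack in the $g$-estimate. Taking the worse of the two branches gives $r$, and Theorem~\ref{thm:Km} delivers the exponential decay of $\Exp[V(\rho_t)]$.

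The main obstacle is the strict positivity of $g$ on the compact set $K$. This is a continuity/compactness argument requiring careful treatment of the boundary configurations ($s=0$, $x_j\in\{0,1\}$) at which $g$ must be shown to extend continuously to $K$ and stay bounded away from zero. A related technical point is the sign analysis of the cross-swap contribution $\sigma_j^2(p_l-p_m)(\partial_{p_l}V-\partial_{p_m}V)$, which is in fact non-negative (since $p_l>p_m\Leftrightarrow V_l>V_m\Leftrightarrow 1/V_l<1/V_m$) and thus works against us; outside $K$ the calibration of $\gainj$ makes it easy to dominate, but inside $K$ under hysteresis one must verify that it is controlled by the measurement-induced concavity correction.
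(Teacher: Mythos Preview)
Your overall plan matches the paper's: compute $\mathcal{A}V$, split the state space into $\{\exists j:\,p_j\geq\alpha_j\}$ and its complement, obtain the $c$-branch from the feedback drift on the first set and the $g$-branch from the measurement concavity on the second, then invoke compactness for $\min_K g>0$. Your concavity argument for discarding the $dB_j$ co-variation is correct and is exactly what the paper does when it drops the term $-\sigma_j^2\,\trr{[X_j,\rho]F_k}/(4f_k\sqrt{f_k})$; same content, packaged abstractly. You also overstate the difficulty of $g>0$ on $K$: the denominators $(1+x_j)^2$ are bounded away from zero on all of $K$, so $g$ is smooth there with no extension needed; one checks directly that $g=0$ forces $s=1$ and some $x_j=1$, which $K$ excludes.

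The genuine gap is your handling of the hysteresis region inside $K$. When $\sigma_j$ is still on with $\beta_j<p_j<\alpha_j$, you hedge between ``adds a strictly negative term'' and ``is absorbed by the slack in the $g$-estimate'', and in your last paragraph you retreat to domination by the measurement contribution. The paper's key observation---which you never isolate---is that the design choice $\beta_j>\tfrac12$ is precisely what makes the feedback drift non-positive whenever it is active: if $\sigma_j\neq 0$ then $p_j>\beta_j>\tfrac12$, and the \emph{entire} $\sigma_j^2$-contribution to $\mathcal{A}V$ equals $\sigma_j^2 g_j(\rho)$ with
\[
g_j=\frac{1-f_j}{\sqrt{f_j}}+\frac{1-2(p_j+p_{j'})}{2\sqrt{f_{j'}}}+\frac{1-2(p_j+p_{j''})}{2\sqrt{f_{j''}}},
\]
all three terms strictly negative once $p_j>\tfrac12$. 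This regrouping (rather than your split into a ``main'' part and a ``cross-swap'') makes the sign transparent and disposes of the hysteresis case without any appeal to slack; your fallback ``controlled by the measurement-induced concavity correction'' would require an additional quantitative comparison that the paper never needs. Incidentally, your cross-swap expression carries a sign slip: the drift contributes $\sigma_j^2(p_m-p_l)(\partial_{p_l}V-\partial_{p_m}V)\geq 0$, so the quantity $(p_l-p_m)(\partial_{p_l}V-\partial_{p_m}V)$ you wrote is non-positive, not non-negative---your conclusion that it works against you is correct, but the formula as written is not.
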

\begin{proof}
By design of the hysteresis, well-posedness of the solution then follows from standard arguments on the construction of solutions of SDE's. The proof then consists in showing that $V(\rho_t)$ on $\mathcal{S}$ is an exponential supermartingale satisfying $\mathcal{A}(V) \leq  - r V$. Towards this we partition the state-space into $\mathcal{Q}:=\cup_{j=1}^3\big\{\rho\in\mathcal{S}~|~ p_j\geq \alpha_j\big\}$ and $\mathcal{S}\setminus\mathcal{Q}$ on which we compute the Markov generator $\mathcal{A}(V)$ separately.
	
Let us write from~\eqref{eq:QEC_CL} the expression of $\mathcal{A}V(\rho)= \Exp\Big[ dV_t ~|~\rho_t=\rho\Big]/dt$  for any value of the control gain vector $\sigma$. We exploit here the following formula based on It\={o} rules and valid for any non-negative operator $F$:
{\small
 $$
 d \sqrt{\tr{F\rho}}  =\frac{\tr{F\,d\rho}}{2\sqrt{\tr{F\rho}}} - \frac{(\tr{F\,d\rho})^2}{4\tr{F\rho} \sqrt{\tr{F\rho}}}
.
 $$
}
We detail below the computations when $\eta_j\equiv\eta$ and $\Gamma_j\equiv \Gamma$ (the formulas in the general case are slightly more complicated).
With $F_1=2\Pi_1+\Pi_2+\Pi_3$ and $V_1(\rho)=\sqrt{f_1}=\sqrt{\tr{F_1\rho}}$, we get
{\small
\begin{multline*}
   \mathcal{A}V_1(\rho) = \tfrac{2\sigma_1^2\big(1-f_1\big) + \sigma_2^2\big(1-2(p_1+p2)\big)+ \sigma_3^2\big(1-2(p_1+p_3)\big)}{2\sqrt{f_1}}
   \\
 \hspace{-1em}  - 4\eta\Gamma\tfrac{\big((p_2+p_3)(1-f_1)\big)^2+ \big(p_1+(p_1+p_3)(1-f_1)\big)^2+ \big(p_1+(p_1+p_2)(1-f_1)\big)^2 }{f_1 \sqrt{f_1}}
   \\
   - \tfrac{\sigma_1^2\trr{[X_1,\rho]F_1}+\sigma_2^2\trr{[X_2,\rho]F_1}+\sigma_3^2\trr{[X_3,\rho]F_1}}{4f_1 \sqrt{f_1}} \; .
\end{multline*}
}
Since $\sqrt{f_1} \geq \tfrac{1}{3\sqrt{2}} V$, we have
{\small
\begin{multline*}
   \mathcal{A}V_1(\rho) \leq  \tfrac{2\sigma_1^2\big(1-f_1\big) + \sigma_2^2\big(1-2(p_1+p2)\big)+ \sigma_3^2\big(1-2(p_1+p_3)\big)}{2\sqrt{f_1}}
   \\
 -    \tfrac{4\eta\Gamma V\big((p_2+p_3)(1-f_1)\big)^2+ \big(p_1+(p_1+p_3)(1-f_1)\big)^2+ \big(p_1+(p_1+p_2)(1-f_1)\big)^2 }{3\sqrt{2} f^2_1}
   .
\end{multline*}
}
Via circular permutation and summation, we get
\begin{equation}\label{eq:AV}
  \mathcal{A}V(\rho) \leq \sum_{j=1}^{3} \sigma_j^2(\rho) g_j(\rho) -  \tfrac{4\eta\Gamma}{3\sqrt{2}} g(\rho) V(\rho)
\end{equation}
where
$\;
g_j(\rho) = \tfrac{1-f_j}{\sqrt{f_j}} + \tfrac{1-2(p_j+p_{j'})}{2\sqrt{f_{j'}}} + \tfrac{1-2(p_j+p_{j''})}{2\sqrt{f_{j''}}}
\;$
with $\{j,j',j''\}=\{1,2,3\}$ and $\;  g(\rho) =  $
\begin{multline*}
     \tfrac{\big((p_2+p_3)(1-f_1)\big)^2+ \big(p_1+(p_1+p_3)(1-f_1)\big)^2+ \big(p_1+(p_1+p_2)(1-f_1)\big)^2 }{(2p_1+p_2+p_3)^2}
  \\
  + \tfrac{\big((p_3+p_1)(1-f_2)\big)^2+ \big(p_2+(p_2+p_1)(1-f_2)\big)^2+ \big(p_2+(p_2+p_3)(1-f_2)\big)^2 }{(2p_2+p_3+p_1)^2}
  \\
  + \tfrac{\big((p_1+p_2)(1-f_3)\big)^2+ \big(p_3+(p_3+p_2)(1-f_3)\big)^2+ \big(p_3+(p_3+p_1)(1-f_3)\big)^2 }{(2p_3+p_1+p_2)^2}
  .
\end{multline*}
When $\rho\in\mathcal{Q}$, we have $p_j \geq \alpha_j > 1/2$ for a unique $j\in\{1,2,3\}$, since $p_1+p_2+p_3 \leq 1$. Assume first that $p_1\geq \alpha_1$, thus $\sigma_1=\sqrt{ \tfrac{6c\eta\Gamma}{2\alpha_1-1}}$ and $\sigma_2(\rho)=\sigma_3(\rho)=0$. Since $g(\rho)\geq 0$, inequality~\eqref{eq:AV} implies
$$
\mathcal{A} V \leq  \tfrac{6c\eta\Gamma}{2\alpha_1-1} \left(\tfrac{1-f_1}{\sqrt{f_1}} + \tfrac{1-2(p_1+p_{2})}{2\sqrt{f_{2}}} + \tfrac{1-2(p_1+p_{3})}{2\sqrt{f_{3}}}\right)
.
$$
Since $f_1\geq 2\alpha_1$, $1-2p_1\leq 0$, $f_1\leq 2$ and   $V \leq 3 \sqrt{2}$ we get
$$
\mathcal{A} V \leq   \tfrac{6c\eta\Gamma}{2\alpha_1-1}  \tfrac{1-2\alpha_1}{\sqrt{f_1}}= -\tfrac{6c\eta\Gamma}{V\sqrt{f_1}}   V
\leq - c \eta \Gamma  V
.
$$
We get a similar inequality when $p_2\geq \alpha_2$ or  $p_3 \geq \alpha_3$. Thus
$$
\forall \rho\in \mathcal{Q},~ \mathcal{A} V(\rho) \leq  - c \eta \Gamma  V(\rho)
.
$$
Consider now $\rho\in\mathcal{S}\setminus\mathcal{Q}$. Then, $p_j < \alpha_j$ for all $j$. Since  $\sigma_j(\rho)=0$ when $p_j \leq 1/2$ we have
$\sigma_j^2(\rho)  g_j(\rho) \leq 0$. From~\eqref{eq:AV}, we have   $\mathcal{A} V(\rho) \leq  -  \tfrac{4\eta\Gamma}{3\sqrt{2}} g(\rho) V(\rho)$.
Let us prove that $g(\rho) \geq r$ for any $\rho\in\mathcal{S}\setminus\mathcal{Q}$.  With $s=p_1+p_2+p_3$ and $x_j=p_j/s$, $g$ can be seen as a function of $(s,x_1,x_2,x_3)$,
{\small
\begin{multline} \label{eq:gg}
 g(\rho) =  g(s,x_1,x_2,x_3)\triangleq\\
     \tfrac{\big((x_2+x_3)(1-f_1)\big)^2+ \big(x_1+(x_1+x_3)(1-f_1)\big)^2+ \big(x_1+(x_1+x_2)(1-f_1)\big)^2 }{(1+x_1)^2}
  \\
  + \tfrac{\big((x_3+x_1)(1-f_2)\big)^2+ \big(x_2+(x_2+x_1)(1-f_2)\big)^2+ \big(x_2+(x_2+x_3)(1-f_2)\big)^2 }{(1+x_2)^2}
  \\
+  \tfrac{\big((x_1+x_2)(1-f_3)\big)^2+ \big(x_3+(x_3+x_2)(1-f_3)\big)^2+ \big(x_3+(x_3+x_1)(1-f_3)\big)^2 }{(1+x_3)^2}
\end{multline}}
with $f_j= 1 - s -sx_j$.  Here $(s,x_1,x_2,x_3)$ belongs to the compact set $s\in[0,1]$, $x_j\geq 0$, $\sum_j x_j=1$ and $s x_j \leq \alpha_j$ for all $j$. On this compact set, $g$ is a smooth function. Moreover it is strictly positive since $g=0$ implies that $s=1$ and $x_j=1$ for some $j\in\{1,2,3\}$ which would not satisfy $s x_j \leq \alpha_j$.  This means that  $\min_{\rho\in\mathcal{S}\setminus\mathcal{Q}} g(\rho) >0$.

Taking all things together, we have proved that $\mathcal{A}V(\rho) \leq -r V(\rho)$ always holds. We conclude with  Theorem \ref{thm:Km}.
\end{proof}

For  a heuristic  estimate of $r$,  take $s=\alpha_j$  with $x_j=1$ for some $j$ to get
$
r \sim \left(\min_{j\in\{1,2,3\}} \eta_j\Gamma_j \right)~ \min\left( c, \tfrac{8}{\sqrt{2}}(1-\bar \alpha)^2\right)
$,
with $\bar \alpha= \max_{j\in\{1,2,3\}} \alpha_j$.  Typically one would take $c=1$ and  $\alpha_1=\alpha_2=\alpha_3=\alpha$  close to 1. When  $\eta_{j}\Gamma_{j}$ are all equal, such a rough  estimate  simplifies to $r=4\sqrt{2}(1-\alpha)^2 \eta \Gamma \, .$

\subsection{Reduced quantum filter}
Towards implementing the control law we have to reconstruct in real-time the quantum state estimate $\rho$ via a quantum filter. For~\eqref{eq:QEC_CL}, this filter reads:
\begin{multline}\label{eq:FullFilter}
d\rho=\sum_{k=1}^{3}\Gamma_k\cD_{S_k}(\rho)dt+\sum_{s=1}^{3}\gamma_s\cD_{X_s}(\rho)dt\\+\sum_{k=1}^{3}\sqrt{\eta_k\Gamma_k}\cM_{S_k}(\rho) \Big(dY_k - 2 \sqrt{\eta_k \Gamma_k} \tr{S_k \rho }dt \Big) \\
+\sum_{j=1}^3 -i\sigma_j(\rho)[X_j,\rho]dB_j + \sigma_j(\rho)^2\cD_{X_j}(\rho)dt \,.
\end{multline}
where $dY_k = 2 \sqrt{\eta_k \Gamma_k} \tr{S_k \rho} dt + dW_k$ is the measurement outcome of syndrome $S_k$, and the random $dB_j$ applied to the system are accessible too a posteriori.

Instead, we can replace the state $\rho_t$ in the feedback law, by  $\hrho_t$ corresponding to the Bayesian estimate of $\rho_t$  knowing its initial condition $\rho_0$ and the syndrome measurements $dY_k$ between $0$ and the current time $t>0$, but not the $dB_j$. Then $\hrho_t$ obeys to the SME:
\begin{multline}\label{eq:ReducedFilter}
d\hrho=\sum_{k=1}^{3}\Gamma_k\cD_{S_k}(\hrho)dt+\sum_{j=1}^{3}(\gamma_s+\sigma_j^2(\hrho)) \cD_{X_j}(\hrho)dt
\\+\sum_{k=1}^{3}\sqrt{\eta_k\Gamma_k}\cM_{S_k}(\hrho) \big(dY_k - 2 \sqrt{\eta_k \Gamma_k} \tr{S_k \hrho}dt\big) 
\end{multline}
where $dY_k= 2 \sqrt{\eta_k \Gamma_k} \tr{S_k \rho }dt + dW_k$ with $\rho$ governed by~\eqref{eq:QEC_CL} where $\sigma_j(\rho)$ is replaced by $\sigma_j(\hrho)$.
Denote $\hat p_j= \tr{\Pi_j \hrho}$ and $\hat s_k=\tr{S_k \hrho}$. Then we have
\begin{multline}\label{eq:Qfilter}
  d\hat s_1 = -2 (\gamma_2 + \sigma_2^2+\gamma_3+\sigma_3^3) \hat s_1 dt
  \\
  +2 \sqrt{\eta_1 \Gamma_1} (1-\hat s_1^2) \big(dY_1 - 2 \sqrt{\eta_1 \Gamma_1}\hat s_1 dt\big)
  \\
    +2 \sqrt{\eta_2 \Gamma_2}(\hat s_3 - \hat s_1 \hat s_2)  \big(dY_2 - 2 \sqrt{\eta_2 \Gamma_2}\hat s_2 dt\big)
\\
  +2 \sqrt{\eta_3 \Gamma_3}(\hat s_2 - \hat s_1 \hat s_3) \big(dY_3 - 2 \sqrt{\eta_3 \Gamma_3} \hat s_3 dt\big)
\end{multline}
with $\hat p_1= (1 + \hat s_1 -\hat s_2 - \hat s_3)/4$. The formulas for
 $d\hat s_{2,3}$ and   $\hat p_{2,3}$  are obtained via circular permutation in $\{1,2,3\}$.
Since the feedback law depends only on the populations $\hat{p}_j$, it can be implemented with the exact quantum filter reduced to $(\hat s_1,\hat s_2,\hat s_3) \in \mathbb{R}^3$. Contrarily to the full quantum filter~\eqref{eq:FullFilter}, here the syndrome dynamics $\hat s_k$ are independent of any coherences among the different subspaces and we get a closed system on classical probabilities, driven by the measurement signals.

\section{On the protection of quantum information}\label{section:Simulations}

It is well-known in control theory that exponential stability gives an indication of robustness against unmodeled dynamics. In the present case, this concerns the first control goal, namely stabilization of $\rho_t$ close to the nominal subspace $\cC$ in the presence of bit-flip errors $\gamma_s \neq 0$. About the second control goal, namely keeping the dynamics on $\cC$ close to zero such that logical information remains protected, the analysis of the previous section is less telling.

We can illustrate both control goals by simulation. As in~\cite{ahn2002continuous} we set as initial condition $\rho_0=\ket{000}\bra{000}$ and simulate 1000 closed-loop trajectories under the feedback law of section~\ref{subsec:controldesign}. We compare the average evolution of this encoded qubit with a single physical qubit subject to a $\sigma_x$ decoherence of the same strength, since this is the situation that the bit-flip code is meant to improve. Parameter values and simulation results are shown on Figure~\ref{fig:BitFlipIdeal} where we consider that the quantum filter perfectly follows \eqref{eq:QEC_CL}. Figure~\ref{fig:BitFlipReal} corresponds to a more realistic situation where the same feedback law relies on  the reduced order quantum filter~\eqref{eq:Qfilter} corrupted by errors and feedback latency: we observe a small change of performance but still a clear improvement compared to a single qubit.

\begin{figure}[htb]
	\centering
	\includegraphics[width=\linewidth]{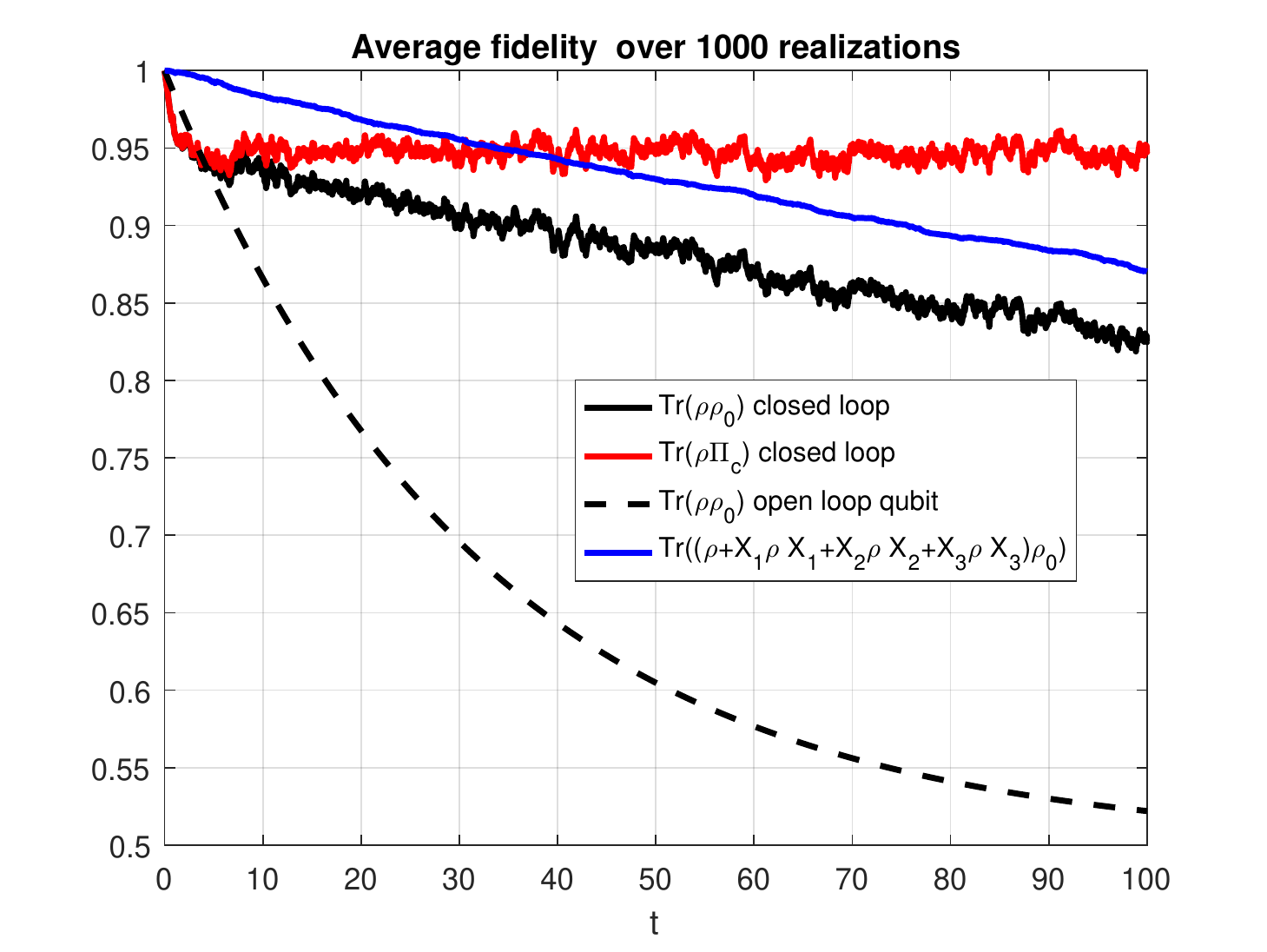}
	\caption{Ideal situation where the feedback of subsection~\ref{subsec:controldesign} is based on $\rho$ governed by~\eqref{eq:QEC_CL}. Solid red: mean overlap of the state with the code space. Solid  black: mean fidelity of the logical qubit versus  $\rho_0$. Solid blue: mean correctable  fidelity  under active quantum feedback. For the three solid curves, the initial state is chosen as $\rho_0=\ket{000}\bra{000}$ and closed-loop simulation parameters based on~\eqref{eq:QEC_CL}  are $\Gamma_j=1$, $\gamma_j=1/64$, $\eta_j=0.8$, and for the feedback law $\beta_j=0.6$, $\alpha_j=0.95$, $c=3/2$.
Dashed line, for comparison: mean fidelity towards $\ket{0}\bra{0}$ for a single physical qubit without measurement nor control and subject to bit-flip disturbances with $\gamma=1/64$.  }
	\label{fig:BitFlipIdeal}
\end{figure}

\begin{figure}[htb]
	\centering
	\includegraphics[width=\linewidth]{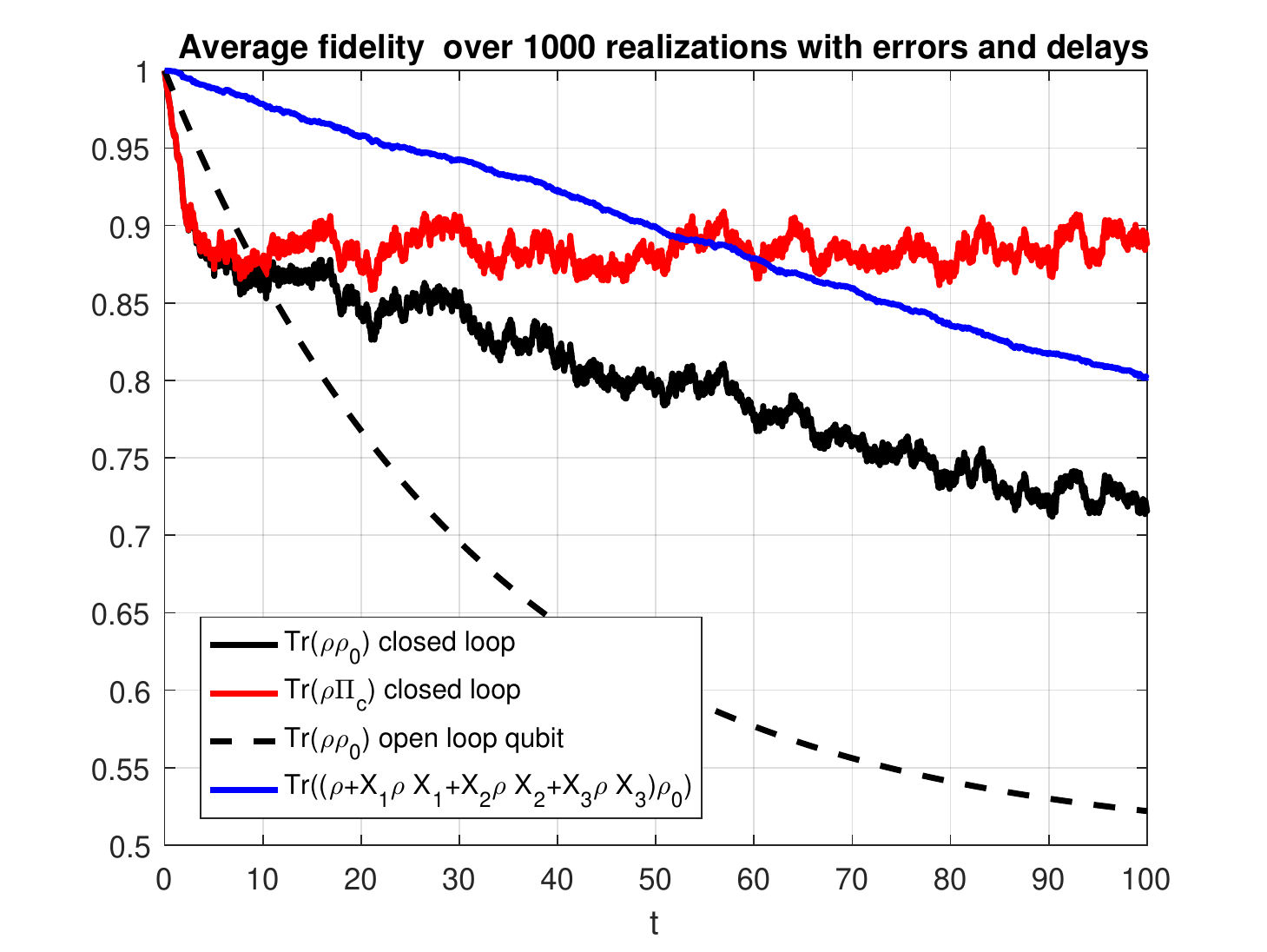}
	\caption{Simulation similar to Fig.~\ref{fig:BitFlipIdeal} for a more realistic case where feedback is based on the reduced order  filter~\eqref{eq:Qfilter} and includes modeling/measurement errors and feedback latency. Marked  with subscript $*$, the parameter values used in~\eqref{eq:Qfilter} are as follows: $\gamma_*=0.8\gamma$, $\Gamma_*=0.9 \Gamma$, $\eta_*=0.9\eta$; constant measurement bias according to  $dY_{*,1}=dY_1+\tfrac{\sqrt{\eta\Gamma}}{10} dt $, $dY_{*,2}=dY_2-\tfrac{\sqrt{\eta\Gamma}}{10} dt$ and  $dY_{*,3}=dY_3+\tfrac{\sqrt{\eta\Gamma}}{20} dt$, , and feedback latency of $1/(2\Gamma)$;  measurement signals  $Y_k$ are based on~\eqref{eq:QEC_CL} with nominal values identical to simulation of Fig.~\ref{fig:BitFlipIdeal}  and control values  $\sigma_j(\hrho)$.
}
	\label{fig:BitFlipReal}
\end{figure}

Regarding the first control goal, we observe that the controller indeed confines the mean evolution to a small neighborhood of $\cC$, for all times, as expected from our analysis. Regarding the second criterion, the distance between $\rho_t$ and $\rho_0$ cannot be confined to a small value for all times. Indeed, majority vote can decrease the rate of information corruption but not totally suppress it; as corrupted information is irremediably lost, $\rho_t$ progressively converges towards an equal distribution of logical 0 and logical 1. However, for the protected 3-qubit code, this information loss is much slower than for the single qubit; this indicates that the 3-qubit code with our feedback law indeed improves on its components.

In our feedback design, making $\alpha_j$ closer to $1/2$ would improve the convergence rate estimate in Theorem \ref{theorem:QEC_Disc}; however, this also has a negative effect on the logical information, since it means that we turn on the noisy drives more often. Analytically computing the optimal tradeoff is the subject of ongoing work. Similary, making $c$ larger would accelerate the recovery action but increase the level of noise, and we want to keep the induced motion slower than the measurement timescale. Simulations (not reproduced here) clearly show that intermediate values of the control parameters deliver better overall results.

\section{Conclusion}
We have approached continuous-time quantum error correction in the same spirit as \cite{ahn2002continuous}, and showed how introducing Brownian motion to drive control fields yields exponential stabilization of the nominal codeword manifold. The main idea relies on the fact that the SDE in open loop stochastically converges to one of a few steady-state situations, but on the average does not move closer to any particular one. It is then sufficient to activate noise only when the state is close to a bad equilibrium, in order to induce globale convergence to the target ones.  This general idea can be extended to other systems with this property, and in particular to more advanced error-correcting schemes. In the same line, while we have  proposed particular controls with hysteresis, proving a similar property with smoother control gains should not be too different.
 The convergence rate obtained is dependent on our choice of Lyapunov function and on the values of $\alpha_j$; from parallel investigation it seems possible to get a closed-loop convergence rate arbitrarily close to the measurement rate.

However, unlike in classical control problems, the key performance indicator is not how fast we approach the target manifold. Instead, what matters is how well, in presence of disturbances, we preserve the encoded information. Towards this goal, we should refrain from disturbing the system with feedback actions; accordingly, we have noticed that taking $\alpha_j$ closer to 1 can improve the codeword fidelity, despite leading to a slower convergence rate estimate. A theoretical analysis of information protection capabilities is the subject of ongoing work.\\

The authors would like to thank K. Birgitta Whaley and Leigh S. Martin for discussions on continuous-time QEC. This work has been supported by the ANR project HAMROQS.


\end{document}